\newlength{\myboxwidth}
\newcounter{blubber}
\lstdefinelanguage{pseudocode}{
  basicstyle=\footnotesize,
  numbers=left,
  numberstyle=\footnotesize,
  stepnumber=1,
  numbersep=2pt,                  
  breaklines=false,                
  breakatwhitespace=false,        
  morekeywords={def,input,output,where,let,while,for,to,invariant,return,do,else},
  morecomment=[l]{//},
  escapechar=\&,
  literate={:=}{{$\gets$}}1 {+=}{{$\modifyWith{+}$}}1
}
\def\Id{\mathit{Id}}
\def\refeq#1{(\ref{#1})}
\def\Powf{\Pow_{<\mathsf{\omega}}}
\newcommand{\yes}{\mathsf{y}}
\newcommand{\no}{\mathsf{n}}
\renewcommand{\theta}{\vartheta}
\newcommand{\argument}{\_\!\_}
\newcommand{\bang}{{!}}
\newcommand{\Kl}{\mathsf{Kl}}
\newcommand{\id}{\mathit{id}}
\newcommand{\BC}{\mathbf{C}}
\newcommand{\Pow}{\mathcal{P}}
\newcommand{\fD}{\+D} 
\newcommand{\Set}{\mathsf{Set}}
\newcommand{\FA}{\mathfrak{A}}
\newcommand*{\@old@slash}{}\let\@old@slash\slash
\def\slash{\relax\ifmmode\delimiter"502F30E\mathopen{}\else\@old@slash\fi}
\spnewtheorem{thm}[theorem]{Theorem}{\bfseries}{\itshape}
\spnewtheorem{cor}[theorem]{Corollary}{\bfseries}{\itshape}
\spnewtheorem{lem}[theorem]{Lemma}{\bfseries}{\itshape}
\spnewtheorem{lemdefn}[theorem]{Lemma and Definition}{\bfseries}{\itshape}
\spnewtheorem{propn}[theorem]{Proposition}{\bfseries}{\itshape}
\spnewtheorem{defn}[theorem]{Definition}{\bfseries}{\upshape}
\spnewtheorem{obs}[theorem]{Observation}{\bfseries}{\upshape}
\spnewtheorem{rem}[theorem]{Remark}{\bfseries}{\upshape}
\spnewtheorem{expl}[theorem]{Example}{\bfseries}{\upshape}
\spnewtheorem{thmdefn}[theorem]{Theorem and Definition}{\bfseries}{\itshape}
\spnewtheorem{propdefn}[theorem]{Proposition and Definition}{\bfseries}{\itshape}
\spnewtheorem{assn}[theorem]{Assumption}{\bfseries}{\upshape}
\spnewtheorem{conv}[theorem]{Convention}{\bfseries}{\upshape}
\spnewtheorem{notn}[theorem]{Notation}{\bfseries}{\upshape}
\spnewtheorem{open}[theorem]{Problem}{\bfseries}{\upshape}
\begin{document}

\title{Simplified Coalgebraic Trace Equivalence} \author{Alexander Kurz\inst{1}\and Stefan Milius\inst{3}\and Dirk Pattinson\inst{2} \and Lutz Schr{\"o}der\inst{3}
} \institute{University of Leicester
\and
The Australian National University
\and
Friedrich-Alexander-Universit\"at
  Erlangen-N{\"u}rnberg}
\maketitle

\begin{abstract}
  The analysis of concurrent and reactive systems is based to a large
  degree on various notions of process equivalence, ranging, on the
  so-called linear-time/branching-time spectrum, from fine-grained
  equivalences such as strong bisimilarity to coarse-grained ones such
  as trace equivalence. The theory of concurrent systems at large has
  benefited from developments in coalgebra, which has enabled uniform
  definitions and results that provide a common umbrella for seemingly
  disparate system types including non-deterministic, weighted,
  probabilistic, and game-based systems. In particular, there has been
  some success in identifying a generic coalgebraic theory of
  bisimulation that matches known definitions in many concrete
  cases. The situation is currently somewhat less settled regarding
  trace equivalence. A number of coalgebraic approaches to trace
  equivalence have been proposed, none of which however cover all
  cases of interest; notably, all these approaches depend on explicit
  termination, which is not always imposed in standard systems,
  e.g. LTS. Here, we discuss a joint generalization of these
  approaches based on embedding functors modelling various aspects of
  the system, such as transition and braching, into a global monad;
  this approach appears to cover all cases considered previously and
  some additional ones, notably standard LTS and probabilistic
  labelled transition systems.
\end{abstract}

\section{Introduction}

It was recognized early on that the initial algebra semantics of Goguen and Thatcher \cite{GoguenT74} needs to be extended to account for notions of observational or behavioural equivalence, see Giarratana, Gimona and Montanari \cite{GGM76}, Reichel \cite{Reichel81}, and Hennicker and Wirsing \cite{HW85}. When Aczel \cite{aczel:nwfs} discovered that at least one important notion of behavioural equivalence---the bisimilarity of process algebra---is captured by final coalgebra semantics, the study of coalgebras entered computer science. Whereas early work emphasized the duality between algebra and coalgebra, it became soon clear that both areas have to be taken together. For example, in the work of Turi and Plotkin \cite{TuriP97}, monads represent the programs, comonads represent their behaviour (operational semantics), and a distributive law between them ensures that the behaviour of a composed system is given by the behaviours of the components, or, more technically, that bisimilarity is a congruence.

Another example of the interplay of algebraic and coalgebraic structure arises from the desire to make coalgebraic methods available for a larger range of program equivalences such as described in van Glabbeek's \cite{Glabbeek90}. To this end, Power and Turi \cite{PowerT99} argued that trace equivalence arises from a distributive law $TF\to FT$ between a monad $T$ describing the non-deterministic part and a functor $F$ describing the deterministic part of a transition system $X\to TFX$.
This was taken up by Hasuo et al \cite{HasuoEA07} and gave rise to a
sequence of papers \cite{KissigKurz10,JacobsEA12,sbbr13,bms13,Cirstea:2014:CAL}
that discuss coalgebraic aspects of trace equivalence.

We generalize this approach and call a trace semantics for coalgebras
$X\to GX$ simply a natural transformation $G\to M$ for some monad
$M$. This allows us, for example, and opposed to the work cited in the
previous paragraph, to account for non-determinstic transition
systems without explicit termination. Moreover, because of the
flexibility afforded by choosing $M$, both trace semantics and
bisimilarity can be accounted for in the same setting. We also show
that for $G$ being of the specific forms investigated in
\cite{HasuoEA07} and in \cite{sbbr13,bms13,JacobsEA12} there is a uniform way of
constructing the a natural transformation of type $G\to M$ that
induces  the traces of \emph{op.cit.} up to canonical forgetting of
deadlocks.

\section{Preliminaries}\label{sec:prelim}
We work with a base category $\BC$, which we may assume for simplicity to be locally finitely presentable, such as the category $\Set$ of sets and functions.

Given a functor $G:\BC\to\BC$, a $G$-\emph{coalgebra} is an arrow $\gamma:X\to GX$. Given two coalgebras $\gamma:X\to GX$ and $\gamma':X'\to GX'$, a \emph{coalgebra morphism} $f:(X,\gamma)\to(X',\gamma')$ is an arrow $f:X\to X'$ in $\BC$ such that $\gamma'\circ f=Gf\circ\gamma$. 

When $\BC$ is a concrete category, we say that two states $x\in X$ and $x'\in X'$ in two coalgebras $(X,\gamma)$ and $(X',\gamma')$ are \emph{behaviourally equivalent} if there are coalgebra morphisms $f,f'$ with common codomain $(Y,\delta)$ such that $f(x)=f'(x')$.

Behavioural equivalence can be computed in a partition-refinement style using the \emph{final coalgebra sequence} $(G^n1)_{n<\omega}$ where $1$ is a final object in $\BC$ and $G^n$ is $n$ fold application of $G$. The projections $p^{n+1}_n:G^{n+1}1\to G^n1$ are defined by induction where $p^1_0:G\to 1$ is the unique arrow to 1 and $p^{n+2}_{n+1} = G(p^{n+1}_{n})$.  

For any coalgebra $(X,\gamma)$, there is a \emph{canonical cone}
$\gamma_n:X\to G^n1$ defined inductively by $\gamma_0:X\to 1$ and
$\gamma_{n+1}=G(\gamma_n)\gamma$. We say that two states $x,x'\in X$
in $(X,\gamma)$ are \emph{finite-depth behaviourally equivalent} if
$\gamma_n(x)=\gamma_n(x')$ for all $n<\omega$. (We remark that if $G$
is a finitary set functor, then finite-depth behavioural equivalence implies
behavioural equivalence.)

A \emph{monad} is given by an operation $M$ on the objects of $\BC$ and, for each set $X$, a function $\eta_X:X\to MX$ and, for each $f:X\to MY$, a so-called Kleisli star $f^*:MX\to MY$ satisfying
(i) $\eta_X^* = \id_{MX}$, 
(ii) $f^*\circ\eta_X = f$, 
(iii) $(g^*\circ f)^* = g^*\circ f^*$
for all $g:Y\to MZ$. It follows that $M$ is a functor, given by
$Mf=(\eta f)^*$, and $\eta$ a natural transformation. Moreover,
$\mu=\id^*:MM\to M$ is a natural transformation and satisfies
$\mu\circ M\eta=\mu\circ \eta M=\id$ and $\mu\circ M\mu=\mu\circ
\mu M$. We obtain the Kleisli star back from $\mu$ and $M$ by
$f^*=\mu Mf$.

An \emph{Eilenberg-Moore algebra} for the monad $M$ is an arrow $\xi:MX\to X$ such that $\xi\circ\eta_X=\id_X$ and $\xi\circ M\xi=\xi\circ \mu_X$.

Recall that an endofunctor $G$ on a category $\BC$ is said to generate
an \emph{algebraically-free} monad $G^*$ if the category of
Eilenberg-Moore algebras of $G^*$ is isomorphic over $\BC$ to the
category of $G$-algebras (i.e.\ morphisms $GX\to X$). The
monad $G^*$ is then also the free monad over $G$; conversely, free
monads are algebraically-free if the base category $\BC$ is
complete~\cite{Barr70,Kelly80}. E.g., when $\BC$ is locally finitely
presentable, then every finitary functor on $\BC$, representing a type
of finitely-branching systems, generates an (algebraically-)free
monad.

\section{A Simple Definition of Coalgebraic Trace Equivalence}

Recall the classical distinction between bisimilarity and trace
equivalence, the two ends of the \emph{linear-time-branching time
  spectrum}~\cite{Glabbeek90}: to cite a much-belaboured standard
example, the two labelled transition systems (over the alphabet
$\Sigma=\{a,b,c\}$)
\begin{equation*}
  \xymatrix{& s_0\ar[dl]_a \ar[dr]^a & & & t_0 \ar[d]^a\\
    s_{10}\ar[d]_b & & s_{11}\ar[d]^c & & t_1\ar[dl]_b\ar[dr]^c &  \\
    s_{20} & & s_{21} & t_{20}  & & t_{21}
 }
\end{equation*}
are \emph{trace equivalent} in the usual sense~\cite{AcetoEA07}, as
they both admit exactly the traces $ab$ and $ac$ (and prefixes
thereof), but not bisimilar, as bisimilarity is sensitive to the fact
that the left hand side decides in the first step whether $b$ or $c$
will be enabled in the second step, while the right hand side leaves
the decision between $b$ and $c$ open in the first step. In other
words, trace equivalence collapses all future branches, retaining only
the branching at the current state. Now observe that we can
nevertheless construct the trace semantics by stepwise unfolding; to
do this, we need to a) remember the last step reached by a given trace
in order to continue the trace correctly, and b) implement the
collapsing correctly in each step. E.g.\ for $s_0$ above, this takes
the following form: let us call a pair $(u,x)$ consisting of a word
over $\Sigma$ and a state $x$ a \emph{pretrace}. Before the first
step, we assign, by default, the set $\{(\epsilon,s_0)\}$ of
pretraces, where $\epsilon$ denotes the empty word. After the first
step, we reach, applying both transitions simultaneously, the set
$\{(a,s_{10}),(a,s_{11})\}$. After the second step, we reach, again
applying two transitions, $\{(ab,s_{20}), (ac,s_{21})\}$. Note that
after the third step, the set of pretraces will become empty if we
proceed in the same manner, as $s_{20}$ and $s_{21}$ are both
deadlocks. Thus, we will in general need to remember all finite
unfoldings of the set of pretraces, as traces ending in deadlocks will be
lost on the way. Of course, for purposes of trace equivalence we are
no longer interested in the states reached by a given trace, so we
forget the state components of all pretraces that we have accumulated,
obtaining the expected prefix-closed trace set $\{\epsilon,a,ab,ac\}$.

Recall that we can understand labelled transition systems as
coalgebras $\gamma:X\to \Pow(\Sigma\times X)$. What is happening in
the unfolding steps is easily recognized as composition with $\gamma$
in the Kleisli category of a suitable monad, specifically
$M=\Pow(\Sigma^*\times \argument)$, a monad that contains the functor
$\Pow(\Sigma\times \argument)$ via an obvious natural
transformation~$\alpha$. Defining $\gamma^{(n)}$ as the $n$-fold
iteration of the morphism $\alpha\gamma$ in the Kleisli category of
$M$, we have $\gamma^{(0)}(s_0)=\{(\epsilon,s_0)\}$,
$\gamma^{(1)}(s_0)=\{(a,s_{10}),(a,s_{11})\}$,
$\gamma^{(2)}(s_0)=\{(ab,s_{20}),(ac,s_{21})\}$, and
$\gamma^{(3)}(s_0)=\emptyset$. Forgetting the state component of the
pretraces in these sets amounts to postcomposing with $M!$, where $!$
is the unique map into $1=\{*\}$.
These considerations lead to the following definitions.
\begin{defn} \label{defn:trace-sem}
  A \emph{trace semantics} for a functor $G$ is a natural
  transformation $\alpha:G\to M$ into a monad $M$, the \emph{global
    monad}. Given such an $\alpha$ and a $G$-coalgebra $\gamma:X\to
  GX$, we define the \emph{iterations} $\gamma^{(n)}:X\to MX$ of
  $\gamma$, for $n\ge 0$, inductively by
  \begin{equation*}
    \gamma^{(0)}=\eta_X\qquad \gamma^{(n+1)}=(\alpha\gamma)^*\gamma^{(n)}
  \end{equation*}
  where the unit $\eta$ and the Kleisli star $*$ are those of $M$ (in
  particular $\gamma^{(1)}=¸\alpha\gamma$). Then the
  \emph{$\alpha$-trace sequence} of a state $x\in X$ is the sequence
  \begin{equation*}
    T^\alpha_\gamma(x)=(M!\gamma^{(n)}(x))_{n<\omega},
  \end{equation*}
  with $!$ denoting the unique map $X\to 1$ as above. Two states $x$
  and $y$ in $G$-coalgebras $\gamma:X\to GX$ and $\delta:Y\to GY$,
  respectively, are \emph{$\alpha$-trace equivalent} if
  \begin{equation*}
    T^\alpha_\gamma(x)=T^\alpha_\delta(y).
  \end{equation*}
  (Although we use an element-based formulation for readability, this
  definition clearly does make sense over arbitrary complete base
  categories.)
\end{defn}
Of course, one shows by induction over $n$ that
\begin{equation}\label{eq:gamman-alt}
  \gamma^{n+1}=(\gamma^{(n)})^*\alpha\gamma\quad\text{for all $n<\omega$}.
\end{equation}
We first note that the trace sequence factors through the initial
$\omega$-segment of the terminal sequence. Recall from
Section~\ref{sec:prelim} that a $G$-coalgebra $\gamma$ induces a cone
$(\gamma_n)$ into the final sequence. 

\begin{lem}\label{lem:factor}
  Let $\alpha:G\to M$ be a trace semantics for $G$, and define
  natural transformations $\alpha_n:G^n\to M$ for
  $n<\omega$ recursively by
  $
  \alpha_0=\eta$ and $\alpha_{n+1}=\mu\alpha
  G\alpha_n$.
  If $\gamma$
  is a $G$-coalgebra, then
  \begin{equation*}
    M!\gamma^{(n)} = \alpha_n\gamma_n\qquad\text{for all
      $n<\omega$}
  \end{equation*}
  for all $n \in \omega$.
\end{lem}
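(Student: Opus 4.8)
The plan is to prove the equation by induction on $n$, reading the element-style notation as composition of morphisms in $\BC$: thus $\alpha_n\gamma_n$ abbreviates $(\alpha_n)_1\circ\gamma_n\colon X\to M1$, and similarly $M!\gamma^{(n)}$ is a morphism $X\to M1$. For the base case $n=0$ one has $M!\gamma^{(0)}=M!\circ\eta_X=\eta_1\circ{!}$ by naturality of $\eta$, while $\alpha_0\gamma_0=\eta_1\circ{!}$ since $\alpha_0=\eta$ and $\gamma_0={!}$ is the unique map $X\to 1$; so both sides agree.

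For the inductive step the crucial point is to unfold $\gamma^{(n+1)}$ not via its defining recursion $\gamma^{(n+1)}=(\alpha\gamma)^*\gamma^{(n)}$, which composes the new step ``on the outside'' and hence does not expose $M!\gamma^{(n)}$, but via the alternative form \eqref{eq:gamman-alt}, i.e.\ $\gamma^{(n+1)}=(\gamma^{(n)})^*\alpha\gamma=\mu_X\circ M\gamma^{(n)}\circ\alpha_X\circ\gamma$. Postcomposing with $M!$ and moving it inward by naturality of $\mu$ (and functoriality of $M$) yields
\begin{equation*}
  M!\gamma^{(n+1)}=\mu_1\circ M(M!\gamma^{(n)})\circ\alpha_X\circ\gamma .
\end{equation*}
Applying the induction hypothesis to rewrite $M!\gamma^{(n)}$ as $(\alpha_n)_1\circ\gamma_n$, then using naturality of $\alpha\colon G\to M$ at the morphism $\gamma_n\colon X\to G^n1$ to replace $M\gamma_n\circ\alpha_X$ by $\alpha_{G^n1}\circ G\gamma_n$, and recalling that $G\gamma_n\circ\gamma=\gamma_{n+1}$, we arrive at
\begin{equation*}
  M!\gamma^{(n+1)}=\mu_1\circ M((\alpha_n)_1)\circ\alpha_{G^n1}\circ\gamma_{n+1}.
\end{equation*}
A second use of naturality of $\alpha$, now at the morphism $(\alpha_n)_1\colon G^n1\to M1$, turns $M((\alpha_n)_1)\circ\alpha_{G^n1}$ into $\alpha_{M1}\circ G((\alpha_n)_1)$; but $\mu_1\circ\alpha_{M1}\circ G((\alpha_n)_1)$ is precisely the component at $1$ of the natural transformation $\mu\,\alpha\,G\alpha_n=\alpha_{n+1}$, so $M!\gamma^{(n+1)}=\alpha_{n+1}\gamma_{n+1}$, as required.

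I do not anticipate a genuine obstacle: the argument is a diagram chase that invokes only naturality of $\eta$, $\mu$ and $\alpha$ together with the definitions of the iterations, the canonical cone, and the $\alpha_n$. The only things that require care are the bookkeeping of whiskerings and components --- in particular keeping track that $\alpha_{n+1}=\mu\,\alpha\,G\alpha_n$ has component $\mu_1\circ\alpha_{M1}\circ G((\alpha_n)_1)$ at $1$ --- and the decision to run the induction through the ``inner'' recursion \eqref{eq:gamman-alt}, without which the induction hypothesis, being a statement about $M!\gamma^{(n)}$, would not be directly applicable.
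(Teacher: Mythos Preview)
Your proof is correct and follows essentially the same approach as the paper: induction on $n$, with the inductive step hinging on the alternative recursion \eqref{eq:gamman-alt}, naturality of $\mu$ and of $\alpha$, and the definitions of $\gamma_{n+1}$ and $\alpha_{n+1}$. The only cosmetic differences are that the paper runs the equational chain in the opposite direction (starting from $\alpha_{n+1}\gamma_{n+1}$) and applies naturality of $\alpha$ once at the composite $M!\gamma^{(n)}$ rather than separately at $\gamma_n$ and $(\alpha_n)_1$ as you do.
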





\begin{proof} Induction on $n$.

  \emph{$n=0$:} We have $M\bang\gamma^{(0)}=M\bang\eta = \eta\bang =
  \alpha_0\gamma_0$.


  \emph{$n\to n+1$:} We have
  \begin{align*}
    & \alpha_{n+1}\gamma_{n+1}  \\
    & = \mu\alpha G(\alpha_n)G\gamma_n\gamma && 
    \text{(Definitions of $\gamma_{n+1}$, $\alpha_{n+1}$)}\\
    & = \mu\alpha G(M\bang \gamma^{(n)})\gamma &&
    \text{(Inductive hypothesis)}\\
    & = \mu M(M\bang\gamma^{(n)})\alpha\gamma &&
    \text{(Naturality of $\alpha$)}\\
    & = M\bang\mu M\gamma^{(n)}\alpha\gamma &&
    \text{(Naturality of $\mu$)}\\
    & = M\bang(\gamma^{(n)})^*\alpha\gamma \\
    & = M!\gamma^{(n+1)}&&
    \text{(\ref{eq:gamman-alt})}.
  \end{align*}\qed
\end{proof}

\begin{cor}
  Finite-depth behaviourally equivalent states are $\alpha$-trace
  equivalent.
\end{cor}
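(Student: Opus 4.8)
The plan is to read the corollary off directly from Lemma~\ref{lem:factor}. First I would make the statement precise: finite-depth behavioural equivalence of a state $x$ in $(X,\gamma)$ and a state $y$ in $(Y,\delta)$ means $\gamma_n(x)=\delta_n(y)$ for all $n<\omega$. This comparison makes sense because both canonical cones $(\gamma_n)$ and $(\delta_n)$ have the common codomain $G^n1$ in degree $n$, exactly as in the single-coalgebra formulation of finite-depth behavioural equivalence in Section~\ref{sec:prelim}.

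Next I would observe that the $\alpha$-trace sequence of a state factors through its canonical cone via a family of natural transformations that does not depend on the coalgebra. Indeed, by definition $T^\alpha_\gamma(x)=(M\bang\gamma^{(n)}(x))_{n<\omega}$, and Lemma~\ref{lem:factor} rewrites each component as $\alpha_n\gamma_n(x)$, where $\alpha_n:G^n\to M$ is built solely from $\alpha$, $\eta$ and $\mu$. Hence $T^\alpha_\gamma(x)$ is obtained by applying the fixed family $(\alpha_n)_{n<\omega}$ componentwise to $(\gamma_n(x))_{n<\omega}$, and likewise $T^\alpha_\delta(y)$ is $(\alpha_n)_{n<\omega}$ applied componentwise to $(\delta_n(y))_{n<\omega}$.

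Finally, if $x$ and $y$ are finite-depth behaviourally equivalent, then $\gamma_n(x)=\delta_n(y)$ for every $n$, so applying $\alpha_n$ yields $M\bang\gamma^{(n)}(x)=\alpha_n\gamma_n(x)=\alpha_n\delta_n(y)=M\bang\delta^{(n)}(y)$ for all $n<\omega$, i.e.\ $T^\alpha_\gamma(x)=T^\alpha_\delta(y)$, which is precisely $\alpha$-trace equivalence. I expect no real obstacle here: the entire mathematical content is already in Lemma~\ref{lem:factor}, and what remains is just the remark that the factorisation runs through a coalgebra-independent family $(\alpha_n)$, so that agreement of the canonical cones in the terminal sequence forces agreement of the trace sequences.
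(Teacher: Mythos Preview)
Your proposal is correct and matches the paper's approach: the corollary is stated immediately after Lemma~\ref{lem:factor} with no separate proof, so the intended argument is exactly the observation you spell out, namely that $M\bang\gamma^{(n)}=\alpha_n\gamma_n$ with the $\alpha_n$ independent of the coalgebra. Your explicit extension of finite-depth behavioural equivalence to states in different coalgebras is the natural reading and causes no difficulty.
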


\begin{rem}
  In most items of related work, stronger assumptions than we make
  here allow for identifying an \emph{object} of traces in a suitable
  category, such as the Kleisli category~\cite{HasuoEA07} or the
  Eilenberg-Moore category~\cite{JacobsEA12,bms13} of a monad that
  forms part of the type functor. In our setting, a similar endeavour
  boils down to characterizing, possibly by means of a limit of a
  suitable diagram, those $\alpha$-trace sequences that are
  \emph{$G$-realizable}, i.e.\ induced by a state in some
  $G$-coalgebra. We do not currently have a general answer for this
  but point out that in a variant of the special case treated in the
  beginning of the section where we take $G$ to be
  $\Pow^*(\Sigma\times\argument)$, with $\Pow^*$ denoting nonempty
  powerset, and $M=\Pow(\Sigma^*\times\argument)$), the set of
  $G$-realizable traces is the limit of the infinite diagram
  \begin{equation*}
    \xymatrix{
      & M1 & & M1 & & M1 &  \dots\\
      1\ar[ur]^\eta & & \Pow(R)\ar[ul]^{\Pow\pi_1} \ar[ur]_{\Pow\pi_2} & &
    \Pow(R)\ar[ul]^{\Pow\pi_1} \ar[ur]_{\Pow\pi_2}&} 
  \end{equation*}
  where $R$ denotes the immediate prefix relation $R=\{(u,ua)\mid
  u\in\Sigma^*,a\in\Sigma\}$ with projections
  $\pi_1,\pi_2:R\to\Sigma^*$. We expect that this description
  generalizes to cases where $G$ and $M$ have the form $TF$ and
  $TF^*$, respectively, where $T$ is a monad and $F^*$ is the free
  monad over the functor $F$, possibly under additional
  assumptions. In the case at hand, the limit of the diagram is the
  set of all subsets $A$ of $\Sigma^*\times 1\cong\Sigma^*$ that are
  prefix-closed and \emph{extensible} in the sense that for every
  $u\in A$ there exists $a\in\Sigma$ such that $ua\in A$.
\end{rem}

\section{Examples}\label{sec:examples}

We show that various process equivalences are
subsumed under $\alpha$-trace equivalence.

\subsubsection{Finite-depth behavioural equivalence} One pleasant
aspect of $\alpha$-trace equivalence is that it spans, at least for
finitely branching systems, the entire length of the
linear-time-branching-time spectrum, in the sense that even
(finite-depth) behavioural equivalence coincides with $\alpha$-trace
equivalence for a suitable $\alpha$.  This is conveniently
formulated using the following terminology.

\begin{defn}
  We say that an endofunctor $G$ on a category with a terminal object
  $1$ is \emph{non-empty} if $G1$ has a global element.
\end{defn}

\noindent
Non-emptyness of an endofunctor entails that the component of
$\alpha_n$ at $1$ are sections where $\alpha_n$ is as in Lemma
\ref{lem:factor}.

\begin{lem}\label{lem:alphan}
  If $G$ is non-empty and generates an algebraically-free monad $G^*$
  with universal arrow $\alpha$, then $(\alpha_n)_1$ (the component of
  $\alpha_n$ at the terminal object) is a section for every
  $n<\omega$.
\end{lem}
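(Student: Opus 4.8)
The plan is to argue by induction on $n$, after first rewriting the recursion for $(\alpha_n)_1$. Evaluating the clause $\alpha_{n+1}=\mu\,\alpha M\,G\alpha_n$ of Lemma~\ref{lem:factor} at the terminal object $1$ gives $(\alpha_0)_1=\eta_1$ and $(\alpha_{n+1})_1=\mu_1\circ\alpha_{M1}\circ G((\alpha_n)_1)$. Writing $\kappa:=\mu\circ\alpha M\colon GM\to M$ (so $\kappa_X=\mu_X\circ\alpha_{MX}$), this reads $(\alpha_{n+1})_1=\kappa_1\circ G((\alpha_n)_1)$. The single conceptual ingredient I would set up first is the standard identification of $\kappa_X$ with the free $G$-algebra structure on $MX=G^*X$: since $G^*$ is algebraically free, an Eilenberg--Moore algebra $\xi\colon MY\to Y$ amounts to the $G$-algebra $\xi\circ\alpha_Y$, and under this correspondence the free $M$-algebra $(MX,\mu_X)$ becomes the $G$-algebra $(MX,\mu_X\circ\alpha_{MX})=(MX,\kappa_X)$, which is then the free $G$-algebra on $X$ with unit $\eta_X$. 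As usual for free algebras this yields an isomorphism $[\eta_X,\kappa_X]\colon X+G(MX)\to MX$, so that $\kappa_1$ is, up to isomorphism, the coprojection $\mathsf{inr}\colon G(M1)\hookrightarrow 1+G(M1)$.

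Granting this, the induction itself is short. For $n=0$, $(\alpha_0)_1=\eta_1\colon1\to M1$ is a section, retracted by the unique map $!\colon M1\to1$ (there is nothing to check, $1$ being terminal). For the inductive step, suppose $r_n$ retracts $(\alpha_n)_1$; since functors preserve split monomorphisms, $G(r_n)$ retracts $G((\alpha_n)_1)$, and hence it suffices to produce a retraction $w$ of $\kappa_1$, for then $r_{n+1}:=G(r_n)\circ w$ retracts $(\alpha_{n+1})_1=\kappa_1\circ G((\alpha_n)_1)$. This is exactly where non-emptiness is used: the coprojection $\mathsf{inr}$ into $1+G(M1)$ is split as soon as $G(M1)$ has a global element, and one such element is $G\eta_1\circ c$, where $c\colon1\to G1$ witnesses non-emptiness of $G$; concretely $w=[G\eta_1\circ c,\;\id_{G(M1)}]\circ[\eta_1,\kappa_1]^{-1}$, and $w\circ\kappa_1=\id_{G(M1)}$ is immediate.

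The step I expect to need the most care is the first one --- the identification of $\kappa$ with the free-algebra structure together with the decomposition $MX\cong X+G(MX)$ --- since it is what lets us read off that $\kappa_1$ is a coproduct injection and hence split once its complementary summand is inhabited. This is a standard fact about (algebraically) free monads, but worth spelling out: equip $X+G(MX)$ with the $G$-algebra structure $\mathsf{inr}\circ G[\eta_X,\kappa_X]$ and with $\mathsf{inl}\colon X\to X+G(MX)$, use the universal property of $MX$ to obtain a $G$-algebra morphism $MX\to X+G(MX)$, and check by uniqueness that it inverts $[\eta_X,\kappa_X]$. Everything else is a routine diagram chase using only the monad laws, naturality of $\alpha$, and the absoluteness of split monomorphisms.\qed
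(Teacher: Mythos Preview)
Your proof is correct and follows essentially the same route as the paper's: both identify $\kappa=\mu\alpha$ with a coproduct injection via the free/initial $(G+X)$-algebra description of $G^*X$, use non-emptiness to exhibit a global element of $GG^*1$ and hence a retraction of that injection, and then run the obvious induction using that functors preserve split monos. The only cosmetic difference is that the paper invokes Lambek's lemma on the initial $(G+X)$-algebra (citing Barr) to obtain the isomorphism $[\mu\alpha,\eta]\colon GG^*X+X\cong G^*X$, whereas you derive the same isomorphism directly from the free-$G$-algebra universal property; the remaining steps line up exactly.
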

\begin{proof}

  For each set $X$, $G^*X$ is the initial $G+X$-algebra,
  with structure map
  \begin{equation*}
    [\mu\alpha,\eta]:GG^*X+X \to G^*X
  \end{equation*}
  where $\mu$ and $\eta$ are the multiplication and unit of
  $G^*$~\cite{Barr70}.
  By Lambek's lemma, it follows that $[\mu\alpha,\eta]$ is an
  isomorphism. Since both summands of the coproduct $GG^*1+1$ are
  nonempty (for $GG^*1$, this follows from non-emptyness of $G$: we
  obtain a global element of $GG^*1$ by postcomposing a global element
  of $G1$ with $G\eta_1:G1\to GG^*1$), the coproduct injections are
  sections, so we obtain that $\mu\alpha$ and $\eta$ are sections,
  each being the composite of a section with an isomorphism. Using
  (1), it follows by induction that $\alpha_n$ is a section
  for each $n<\omega$. \qed
\end{proof}

\noindent
(Notice that $G$ is non-empty as soon as any $GX$ has a global
element; if the base category is $\Set$, then every functor is
non-empty except the constant functor for $\emptyset$.)
\begin{propn} \label{propn:strong-bisim}
  If $G$ is non-empty and generates an algebraically-free monad via
  $\alpha:G\to G^*$, then $\alpha$-trace equivalence coincides with
  $\omega$-behavioural equivalence.
\end{propn}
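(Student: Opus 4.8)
The plan is to combine Lemmas~\ref{lem:factor} and~\ref{lem:alphan}, noting that the global monad here is $M=G^*$, the algebraically-free (hence free) monad on $G$, with universal arrow $\alpha$. One of the two inclusions is immediate: by the corollary following Lemma~\ref{lem:factor}, finite-depth behaviourally equivalent states are $\alpha$-trace equivalent for \emph{any} trace semantics, so in particular for this $\alpha$. Hence I would prove only the converse, that $\alpha$-trace equivalence already implies finite-depth (equivalently, $\omega$-) behavioural equivalence; here, as usual, we read finite-depth behavioural equivalence of states $x\in X$, $y\in Y$ in coalgebras $\gamma\colon X\to GX$, $\delta\colon Y\to GY$ as $\gamma_n(x)=\delta_n(y)$ for all $n<\omega$, where $\gamma_n,\delta_n$ are the canonical cones into the final sequence (equivalently, apply the Section~\ref{sec:prelim} definition to the coproduct coalgebra).

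For the converse, suppose $x$ and $y$ are $\alpha$-trace equivalent, i.e.\ $M!\,\gamma^{(n)}(x)=M!\,\delta^{(n)}(y)$ for every $n<\omega$. By Lemma~\ref{lem:factor}, the left-hand side equals $(\alpha_n)_1\bigl(\gamma_n(x)\bigr)$ and the right-hand side equals $(\alpha_n)_1\bigl(\delta_n(y)\bigr)$, where $\alpha_n\colon G^n\to M$ is the natural transformation defined there. Now I invoke Lemma~\ref{lem:alphan}: since $G$ is non-empty and $M=G^*$ is algebraically-free over $G$ via $\alpha$, the component $(\alpha_n)_1\colon G^n1\to M1$ is a section, and therefore a monomorphism (a split mono is mono: a left inverse cancels it). Cancelling $(\alpha_n)_1$ yields $\gamma_n(x)=\delta_n(y)$, and letting $n$ range over $\omega$ we conclude that $x$ and $y$ are finite-depth behaviourally equivalent, as required.

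I do not anticipate a real obstacle: the substance of the argument lives entirely in Lemmas~\ref{lem:factor} and~\ref{lem:alphan}, and what remains is the observation that, for the free-monad trace semantics, the factorisation $M!\,\gamma^{(n)}=\alpha_n\gamma_n$ of Lemma~\ref{lem:factor} runs through a (pointwise) split monomorphism, so that passing to traces discards no information present in the terminal sequence. The only things to watch are bookkeeping points: correctly identifying the trace-sequence equalities with the componentwise equalities $(\alpha_n)_1\gamma_n(x)=(\alpha_n)_1\delta_n(y)$ in $M1$, and noting that the cancellation step uses nothing beyond ``sections are monos'', so the result is not tied to $\Set$ and holds over any complete base category (matching the scope claimed for Definition~\ref{defn:trace-sem}).
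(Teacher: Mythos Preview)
Your proposal is correct and follows exactly the approach the paper takes: the paper's proof is the one-liner ``Immediate from Lemmas~\ref{lem:factor} and~\ref{lem:alphan}'', and you have simply spelled out what ``immediate'' means here, namely that the factorisation $M!\,\gamma^{(n)}=(\alpha_n)_1\gamma_n$ from Lemma~\ref{lem:factor} combined with the split mono $(\alpha_n)_1$ from Lemma~\ref{lem:alphan} lets one cancel to recover the canonical cone values.
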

\begin{proof}
  Immediate from Lemmas~\ref{lem:factor} and~\ref{lem:alphan}
\qed\end{proof}

\subsubsection{Labelled Transition Systems (LTS)} We provide some additional
details for our initial example: We have $GX=\Pow(\Sigma\times X)$ and
$MX=\Pow(\Sigma^*\times X)$, with $\alpha$ the obvious inclusion. The
monad  $M$ arises from $G$, as we will see later again in \eqref{eq:MTF}, from  a distributive law $\delta_X:\Sigma\times\Pow(X)\to\Pow(\Sigma\times X)$
which maps a pair $(a,S)$ to $\{a\}\times S$. Explicitly, the unit of
$M$ is given by $\eta(x)=\{(\epsilon,x)\}$, and the multiplication by
$\mu(\FA)=\{(uv,x)\mid \exists (u,S)\in\FA.\,(v,x)\in S\}$ for
$\FA\in\Pow(\Sigma^*\times\Pow(\Sigma^*\times X))$. For each $n$ and
each state $x$ in an LTS $\gamma:X\to\Pow(\Sigma\times X)$,
$\gamma^{(n)}(x)$ consists of the pretraces of $x$ of length exactly $n$,
i.e.\
\begin{equation*}
  \gamma^{(n)}(x)=\{(u,y)\mid x\stackrel{u}{\to} y, u\in\Sigma^n\}
\end{equation*}
where $\stackrel{u}{\to}$ denotes the usual extension of the
transition relation to words $u\in\Sigma^*$. Thus, $M!\gamma^{(n)}(x)$
consists of the traces of $x$ of length $n$, i.e.\
$M!\gamma^{(n)}(x)=\{(u,*)\mid x\stackrel{u}{\to},u\in\Sigma^n\}$
(where, as usual, $x\stackrel{u}{\to}$ denotes $\exists
y.x\stackrel{u}{\to}y$). Thus, states $x$ and $y$ are $\alpha$-trace
equivalent iff they are trace equivalent in the usual sense, i.e.\ iff
$\{u\in\Sigma^*\mid x\stackrel{u}{\to}\}=\{u\in\Sigma^*\mid
y\stackrel{u}{\to}\}$. The entire scenario transfers verbatim to the
case of finitely branching LTS, with
$G=\Pow_\omega(\Sigma\times\argument)$ and
$M=\Pow_{<\omega}(\Sigma^*\times\argument)$, where $\Pow_{<\omega}$ denotes
finite powerset.

\subsubsection{LTS with explicit termination} The leading example
treated in related work on coalgebraic trace
semantics~\cite{HasuoEA07,JacobsEA12,bms13} is a variant of LTS with
explicit termination, described as coalgebras for the functor
\begin{equation*}
  \Pow(1+\Sigma\times\argument)\cong 2\times\Pow^\Sigma.
\end{equation*}
A state in an LTS with explicit termination can be seen as a
non-deterministic automaton; this suggests that one might expect the
traces of such a state to be the words accepted by the corresponding
automaton, and this in fact the stance taken in previous
work~\cite{HasuoEA07,JacobsEA12,bms13}; for the sake of distinction, let us
call this form of trace semantics \emph{language semantics}. Staring
at the problem for a moment reveals that language semantics does not
fit directly into our framework: Basically, our definition of trace
sequence assembles the traces via successive iteration of the
coalgebra structure, and remembers the traces reached in each
iteration step. Contrastingly, language semantics will drop a word
from the trace set if it turns out that upon complete execution of the
word, no accepting state is reached -- in $\alpha$-trace semantics, on
the other hand, we will have recorded prefixes of the word on the way,
and our incremental approach does not foresee forgetting these
prefixes. See Section~\ref{sec:related} for a discussion of how
$\alpha$-trace sequences can be further quotiented to obtain language
semantics.

Indeed one might contend that a more natural trace semantics of an LTS
with explicit termination will distinguish two types of traces: those
induced by the plain LTS structure, disregarding acceptance, and those
that additionally end up in accepting states; this is related to the
trace semantics of CSP~\cite{Hoare85}, which distinguishes deadlock
from successful termination $\checkmark$. Such a semantics is
generated by our framework as follows.  As the global monad, we take
$MX=\Pow(\Sigma^*\times(X+1))$ (where we regard $X$ and
$1=\{\checkmark\}$ as subsets of $X+1$), with
$\eta(x)=\{(\epsilon,x)\}$ and
\begin{equation*}
  f^*(S)=\{(uv,b)\mid \exists(u,x)\in
  S\cap(\Sigma^*\times X).\,(v,b)\in f(x)\}\cup (S\cap(\Sigma^*\times 1))
\end{equation*}
for $f:X\to MY$ and $S\in MY$. This is exactly the monad induced by
the distributive law
$\lambda_X:1+\Sigma\times\Pow(X)\to\Pow(1+\Sigma\times X)$ with
$\lambda_X(\checkmark)=\{\checkmark\}$ and $\lambda_X(a,S)=a\times S$
as used by Hasuo et al.~\cite{HasuoEA07}. We embed
$\Pow(1+\Sigma\times\argument)$ into $M$ by the natural
transformation $\alpha$ given by
\begin{equation*}
  \alpha_X(S)=\{(\epsilon,\checkmark)\mid \checkmark\in S\}\cup 
  \{(a,x)\mid (a,x)\in S\}
\end{equation*}
(implicitly converting letters into words in the second part). Then
$M1\cong\Pow(\Sigma^*)^2$ where the first components records accepted
words and the second component non-blocked words; in $\alpha$-trace
sequences, the first component is always contained in the second one,
and increases monotonically over the sequence as the Kleisli star as
defined above always keeps traces that are already accepted. Two
states are $\alpha$-trace equivalent iff they generate the same traces
and the same accepted traces, in the sense discussed above. 

All this is not to say that our framework does not cover the language
semantics of non-deterministic automata. Note that we can impose
w.l.o.g.\ that a non-deterministic automaton never blocks an input
letter -- if a state fails to have an $a$-successor, just add an
$a$-transition into a non-accepting state that loops on all input
letters and has no transitions into other states; this clearly leaves
the language of the automaton unchanged. This restriction amounts to
considering coalgebras for the subfunctor
\begin{equation*}
  G=2\times (\Pow^*)^\Sigma
\end{equation*}
of the functor $\Pow(1+\Sigma\times\argument)$ modelling LTS with
explicit termination, where $\Pow^*$ denotes non-empty powerset. We
embed this functor into the same monad $M$ as above, by restricting
$\alpha:\Pow(1+\Sigma\times\argument)\to M$ to $G$. Calling
$G$-coalgebras \emph{non-blocking non-deterministic automata}, we now
have that \emph{two states in a non-blocking non-deterministic
  automaton are $\alpha$-trace equivalent iff they accept the same
  language}. For a coalgebra $\gamma:X\to GX$, the maps
$\gamma^{(n)}:X\to M1$, of course, still record accepted traces as
well as plain traces, but the plain traces no longer carry any
information: all $\alpha$-trace sequences have the form
$(L_n,\Sigma^n)_{n<\omega}$ (with $L_n\subseteq\Sigma^*$ recording the
accepted words of length at most $n$).

\subsubsection{Probabilistic Transition Systems} Recall that
\emph{generative probabilistic (transition) systems} (for simplicity
without the possibility of deadlock, not to be confused with explicit
termination) are modelled as coalgebras for the functor
$\fD(\Sigma\times\argument)$ where $\fD$ denotes the discrete
distribution functor (i.e.\ $\fD(X)$ is the set of discrete
probability distributions on $X$, and $\fD(f)$ takes image measures
under $f$). That is, each state has a probability distribution over
pairs of actions and successor states. We embed
$\fD(\Sigma\times\argument)$ into the global monad
$MX=\fD(\Sigma^*\times\argument)$ via the natural transformation
$\alpha$ that takes a discrete distribution $\mu$ on $\Sigma\times X$
to the discrete distribution on $\Sigma^*\times X$ that behaves like
$\mu$ on $\Sigma\times X$ (where we see $\Sigma$ as a subset of
$\Sigma^*$) and is $0$ outside $\Sigma\times X$. The unit $\eta$ of
$M$ maps $x\in X$ to the Dirac distribution at $(\epsilon,x)$, and for
$f:X\to MY$, 
\begin{equation*}
  f^*(\mu)(u,y)=\sum_{u=vw,x\in X}\mu(v,x)f(x)(w,y)
\end{equation*}
for all $\mu\in MX$, $(u,y)\in\Sigma^*\times Y$. This is the monad
induced by the canonical distributive law~\cite{HasuoEA07}
$\lambda:\Sigma\times\fD\to \fD(\Sigma\times\argument)$ given by
$\lambda_X(a,\mu)=\delta(a)*\mu$ where $\delta$ forms Dirac measures
and $*$ is product measure. We identify $M1$ with
$\fD(\Sigma^*)$. Given these data, observe that for $\gamma:X\to
\fD(\Sigma\times X)$ and $x\in X$, each distribution
$M!\gamma^{(n)}(x)$ is concentrated at traces of length $n$.

Assume from now on that $\Sigma$ is finite. Recall that the usual
$\sigma$-algebra on the set $\Sigma^\omega$ of infinite words over
$\Sigma$ is generated by the \emph{cones}, i.e.\ the sets
$v{\uparrow}=\{vw\mid w\in\Sigma^\omega\}$, $v\in\Sigma^*$, which (by
finiteness of $\Sigma$) form a semiring of sets. We let states $x$ in
a coalgebra $\gamma:X\to\fD(\Sigma\times X)$ \emph{induce}
distributions $\mu_x$ on $\Sigma^\omega$ via the Hahn-Kolmogorov
theorem, defining a content $\mu(v{\uparrow})$ inductively by
\begin{align*}
  \mu_x(\epsilon{\uparrow})& =1\\
  \mu_x(av{\uparrow}) & = \sum_{x'\in X}\gamma(a,x')\mu_{x'}(v{\uparrow})
\end{align*}
-- a compactness argument, again hinging on finiteness of $\Sigma$,
shows that no cone be written as a countably infinite disjoint union
of cones, so $\mu$ is in fact a pre-measure, i.e.\ $\sigma$-additive.

We note explicitly
\begin{propn}
  States in generative probabilistic systems over a finite alphabet
  $\Sigma$ are $\alpha$-trace equivalent iff they induce the same
  distribution on $\Sigma^\omega$.
\end{propn}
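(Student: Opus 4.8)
The plan is to make precise the observation, already flagged in the text, that the $n$-th term $M!\gamma^{(n)}(x)$ of the trace sequence records exactly the cone probabilities $(\mu_x(u{\uparrow}))_{u\in\Sigma^n}$ of the induced measure $\mu_x$; granting this, both directions of the equivalence fall out of the standard uniqueness theorem for measures agreeing on a generating $\pi$-system.

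First I would prove, by induction on $n$, that for every coalgebra $\gamma:X\to\fD(\Sigma\times X)$, every $x\in X$ and every $u\in\Sigma^*$,
\[
  M!\gamma^{(n)}(x)(u)=\begin{cases}\mu_x(u{\uparrow})&\text{if }|u|=n,\\ 0&\text{otherwise.}\end{cases}
\]
The base case is immediate: $\gamma^{(0)}=\eta_X$ yields the Dirac distribution at $(\epsilon,x)$, so $M!\gamma^{(0)}(x)$ is the Dirac distribution at $\epsilon$, matching $\mu_x(\epsilon{\uparrow})=1$. For the induction step I would use the alternative form $\gamma^{(n+1)}=(\gamma^{(n)})^*\alpha\gamma$ from \eqref{eq:gamman-alt}. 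Since $\alpha\gamma$ is supported on $\Sigma\times X\subseteq\Sigma^*\times X$, unfolding the Kleisli star of $M$ gives, for $u=av'$ with $a\in\Sigma$,
\[
  \gamma^{(n+1)}(x)(av',y)=\sum_{x'\in X}\gamma(x)(a,x')\,\gamma^{(n)}(x')(v',y),
\]
which vanishes unless $|v'|=n$ (recovering the observation that $M!\gamma^{(n+1)}(x)$ is concentrated on words of length $n+1$). Applying $M!$, i.e.\ summing over $y\in X$, and using the inductive hypothesis $\sum_y\gamma^{(n)}(x')(v',y)=\mu_{x'}(v'{\uparrow})$ reproduces precisely the defining recursion $\mu_x(av'{\uparrow})=\sum_{x'}\gamma(x)(a,x')\mu_{x'}(v'{\uparrow})$.

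With this identity in hand, $T^\alpha_\gamma(x)=T^\alpha_\delta(y)$ holds iff $\mu_x(u{\uparrow})=\mu_y(u{\uparrow})$ for every $u\in\Sigma^*$, i.e.\ iff $\mu_x$ and $\mu_y$ agree on all cones; the ``if'' direction of the proposition is then trivial. For ``only if'' I would observe that the cones $v{\uparrow}$ together with $\emptyset$ form a $\pi$-system --- the intersection of $u{\uparrow}$ and $v{\uparrow}$ is the smaller of the two cones if one of $u,v$ is a prefix of the other, and $\emptyset$ otherwise --- which by construction generates the Borel $\sigma$-algebra on $\Sigma^\omega$ and contains $\Sigma^\omega=\epsilon{\uparrow}$, and that $\mu_x$ and $\mu_y$ are genuine probability measures on $\Sigma^\omega$ (this is exactly what the Hahn--Kolmogorov argument preceding the proposition secures, which is where finiteness of $\Sigma$ is used), of total mass $\mu_x(\Sigma^\omega)=\mu_x(\epsilon{\uparrow})=1$. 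The uniqueness clause of the Carath\'eodory extension theorem --- equivalently, Dynkin's $\pi$-$\lambda$ theorem --- then forces $\mu_x=\mu_y$ on all Borel sets.

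The only mildly delicate point is the bookkeeping in the inductive step, in particular keeping track of the fact that $\alpha\gamma$ puts all of its mass on one-letter words, so that a single iteration of the Kleisli multiplication of $M$ appends exactly one letter; this is consistent with Lemma~\ref{lem:factor}, but the direct induction is the shortest route. All the genuine analytic content --- $\sigma$-additivity of the $\mu_x$ and uniqueness of the extension --- has already been isolated before the statement, so no further work is needed there.
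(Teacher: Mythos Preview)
Your proposal is correct and follows exactly the same line as the paper's own proof, which consists of the single displayed identity $\mu_x(v{\uparrow})=(M!\gamma^{(n)}(x))(v)$ for $|v|=n$; you simply supply the inductive verification of this identity and spell out the uniqueness argument (via the $\pi$-$\lambda$ theorem) that the paper leaves implicit in its appeal to Hahn--Kolmogorov.
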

\begin{proof}
  For $v$ a word of length $n$ and $x$ a state in a generative
  probabilistic system, we have
  \begin{equation*}
    \mu_x(v{\uparrow})=(M!\gamma^{(n)}(x))(v). 
  \end{equation*}
\qed\end{proof}
\section{Relation to Other Frameworks}\label{sec:related}

\subsubsection*{Kleisli Liftings} Hasuo et al.~\cite{HasuoEA07} treat
the case where the type functor $G$ has the form $TF$ for a monad $T$
and a finitary endofunctor $F$ on sets. They require that $F$ lifts to
a functor $\bar F$ on the Kleisli category of $T$, which is equivalent
to having a (functor-over-monad) distributive law
\begin{equation*}
  \lambda:FT\to TF.
\end{equation*}
They impose further conditions that include a cppo structure on the
hom-sets of the Kleisli category $\Kl(T)$ of $T$ and ensure that
\begin{itemize}
\item $T\emptyset$ is a singleton, so that $\emptyset$ is a terminal
  object in $\Kl(T)$ (unique Kleisli morphisms into $\emptyset$ of
  course being $\bot$); and
\item the final sequence of $\bar F$ coincides on objects with the
  initial sequence of $F$, and converges to the final $\bar
  F$-coalgebra in $\omega$ steps.
\end{itemize}
The trace semantics of a $TF$-coalgebra is then defined as the unique
Kleisli morphism into the final $\bar F$-coalgebra; in keeping with
distinguishing terminology used in Section~\ref{sec:examples}, we
refer to this as language semantics. Thus, two states in a
$TF$-coalgebra are \emph{language equivalent}, i.e.\ trace equivalent
in the sense of Hasuo et al., iff they map to the same values in the
final sequence of $\bar F$ under the cones induced by the respective
coalgebras. Explicitly: the underlying sets of the final sequence of
$\bar F$ have the form $TF^n\emptyset$, $n<\omega$, and given a
coalgebra $\gamma:X\to TFX$, the canonical cone $(\bar\gamma_n:X\to
TF^n\emptyset)_{n<\omega}$ is defined recursively by $\gamma_0=\bot$
and
\begin{equation*}
\bar\gamma_{n+1}=\xymatrix{X\ar[r]^-\gamma & TFX\ar[r]^-{TF\bar\gamma_n}
  & TFTF^n\emptyset \ar[r]^{T\lambda} & TTF^{n+1}\emptyset
  \ar[r]^{\mu} & TF^{n+1}\emptyset.}
\end{equation*}
Now the distributive law $\lambda$ induces a monad structure on the
functor 
\begin{equation}\label{eq:MTF}
M=TF^*,
\end{equation} where $F^*$ denotes the (algebraically-)free monad
on $F$ (cf.\ Section~\ref{sec:examples}), and we have a natural
transformation $\alpha:TF\to M$, so that the situation fits our
current framework. The sets $TF^nX$ embed into $MX$, so that the
objects in the final sequence of $\bar F$ can be seen as living in
$M0$. The definition of $\bar\gamma_{n+1}$ is then seen to be just an
explicit form of Kleisli composition in $M$; that is, we can, for
purposes of language equivalence, replace the $\bar\gamma_n$ with maps
$\tilde\gamma_n:X\to M0$ defined recursively by
\begin{equation*}
  \tilde\gamma_0=\bot\qquad\tilde\gamma_{n+1}=\tilde\gamma_n^*\alpha\gamma
\end{equation*}
where the Kleisli star is that of $M$. Comparing with
(\ref{eq:gamman-alt}), we see that the only difference with the
definition of $\gamma^{(n)}$ is in the base of the recursion:
$\gamma^{(0)}=\eta_X$ . Noting moreover that 
\begin{equation*}
  \bot^*M!\eta_X=\bot^*\eta !=\bot!=\bot,
\end{equation*}
we obtain
\begin{equation*}
  \tilde\gamma_n=\bot^*M!\gamma^{(n)}.
\end{equation*}
(Kissig and Kurz~\cite{KissigKurz10} use a very similar definition in
a more general setting that in particular, for non-commutative $T$,
does not restrict $T\emptyset$ to be a singleton, and instead assume
some distinguished element $e\in T\emptyset$. They then put
$\tilde\gamma_0=\lambda x.\,e$; the comparison with our framework is
then entirely analogous.)

 Summing up, \emph{language equivalence is induced from
  $\alpha$-trace equivalence by postcomposing $\alpha$-trace sequences
  with $\bot^*:M1\to M0$}. Intuitively, this means that any
information tied to poststates in a pretrace is erased in language
equivalence, as opposed to just forgetting the poststate itself in
$\alpha$-trace equivalence. An example of this phenomenon are LTS with
explicit termination as discussed in
Section~\ref{sec:examples}. Moreover, this observation elucidates why
language equivalence becomes trivial in cases without explicit
termination, such as standard LTS: here, all traces are tied to
poststates and hence are erased when postcomposing with
$\bot^*$. (This is also easily seen directly~\cite{HasuoEA07}: without
explicit termination, e.g.\ $F=\Sigma\times\argument$, one typically
has $F\emptyset=\emptyset$ so that the final $\bar F$-coalgebra is
trivial in the Kleisli category of $M$.)

\subsubsection*{Eilenberg-Moore Liftings} An alternative route to
final objects for trace semantics was first suggested by the
generalized powerset construction of Silva et
al.~\cite{bbrs_fsttcs} and explicitly formulated in~\cite{bms13} (see
also Jacobs et al.~\cite{JacobsEA12} where this is compared to the
semantics given by Kleisli liftings). In this approach one considers liftings of functors to
Eilenberg-Moore categories in lieu of Kleisli categories. The setup
applies to functors of the form $G=FT$ where $F$ is an endofunctor and
$T$ is a monad on a base category $\BC$. It is based on assuming a
final $F$-coalgebra $Z$ and a (functor-over-monad) distributive law
\begin{equation*}
  \rho: TF\to FT.
\end{equation*}
Under these assumptions, $F$ lifts to an endofunctor $\hat F$ on the
Eilenberg-Moore category $\BC^T$ of $T$, and the free-algebra functor
$\BC\to\BC^T$ lifts to a functor $D$ from $FT$-coalgebras to $\hat
F$-coalgebras, which can be seen as a generalized powerset
construction. Explicitly, $D(\gamma)=F\mu^T_X\rho_{TX}T\gamma$ for
$\gamma:X\to FTX$, where $\mu^T$ denotes the multiplication of $T$. In
other words, $D(\gamma): TX \to FTX$ is the unique $T$-algebra
morphism with $D(\gamma) \cdot \eta^T_X = \gamma$. Moreover, $\hat F$ has
a final coalgebra with carrier $Z$. The \emph{extension semantics}
(i.e.\ trace semantics obtained via the powerset extension) of an
$FT$-coalgebra $\gamma:X\to FTX$ is then obtained by first applying
$D$ to $\gamma$, obtaining a $\hat F$-coalgebra with carrier $TX$ and
hence a $\hat F$-coalgebra map $TX\to Z$, and finally precomposing
with $\eta^T_X:X\to TX$ where $\eta^T$ denotes the unit of $T$.

In order to compare this with our framework, in which we currently
consider only finite iterates of the given coalgebra, we need to
assume that $F$-behavioural equivalence coincides with finite-depth
behavioural equivalence; this is ensured e.g.~by assuming that $F$ is a
finitary endofunctor on $\Set$. In this case,
two states have the same extension semantics iff they induce the same
values in the first $\omega$ steps of the final sequence of $\hat F$,
whose carriers coincide with the final sequence of $F$. Combining the
definition of $D\gamma$ for a coalgebra $\gamma:X\to FTX$ with the
usual construction of the canonical cone for $D\gamma$, which we
denote by $\bar\gamma_n:TX\to F^n1$ for distinction from the canonical
cone of $\gamma$ in the final sequence of $FT$, we obtain that
$\bar\gamma_n$ is recursively defined by
\begin{align*}
  \bar\gamma_0 & =  \bang_{TX}:TX\to 1\\
  \bar\gamma_{n+1} & = F\bar\gamma_nT\gamma\rho F\mu^T.
\end{align*}
Now let us also assume that $T$ is a finitary monad on $\Set$. Then
$\Set^T$ is a locally finitely presentable category, and since the
forgetful functor to $\Set$ creates filtered colimits, we see that the
lifting $\hat F$ is finitary on $\Set^T$. Hence free $\hat F$-algebras
exists, which implies that we have the adjunction on the right below
\[
\xymatrix@1{
  \Set 
  \ar@<4pt>[r] 
  \ar@{}[r]|-\perp
  & 
  \Set^T \ar@<4pt>[l] 
  \ar@<4pt>[r] 
  \ar@{}[r]|-\perp
  &
  \mathsf{Alg}\,\hat F
  \ar@<4pt>[l] 
},
\]
and the adjunction on the left is the canonical one. We define $M$ to
be the monad of the composed adjunction; it assigns to a set $X$ the
underlying set $\hat F^* TX$ of a free $\hat F$-algebra on the free
$T$-algebra $TX$; here $\hat F^*$ denotes the free monad on $\hat F$
(notice that this is not in general a lifing of the free monad on $F$
to $\Set^T$). Intuitively, $M$ is defined by forming the disjoint
union of the algebraic theories associated to $T$ and $F$,
respectively, and then imposing the distributive law between the
operations of $T$ and $F$ embodied by $\rho$. In the following we
shall denote the unit and multiplication of $\hat F^*$ by $\hat\eta$
and $\hat\mu$, respectively. We also write $\hat\varphi_X: \hat F \hat
F^*X \to \hat F^* X$ for the structures of the free $\hat F$-algebras
and note that these yield a natural transformation $\hat\varphi$.

Now denote by $\hat\kappa: \hat F \to \hat F^*$ the universal natural
transformation into the free monad; it is easy to see that $\hat\kappa
= \hat \varphi \cdot \hat F\hat \eta$. Then it follows that $\alpha =
\hat\kappa T$ yields a natural transformation from $FT$ to $M$ (on
$\Set$). Let us further recall that there exist canonical natural
transformations $\hat\beta^n: \hat F^n \to \hat F^*$ defined
inductively by
\[
\hat\beta^0 = (\xymatrix@1{\Id \ar[r]^-{\hat\eta} & \hat F^*})
\qquad\text{and}\qquad
\hat\beta^{n+1}= (\xymatrix@1{
\hat F^{n+1} = \hat F\hat F^n \ar[r]^-{\hat F \hat\beta^n} & \hat
F\hat F^* \ar[r]^-{\hat\varphi} & \hat F^*
}).
\]
We can assume w.l.o.g.\ that $F$ preserves monos (hence, so does $\hat
F$ since monos in $\Set^T$ are precisely injective $T$-algebra
homomorphisms) and that coproduct injections are monic in
$\Set^T$. Then an easy induction shows that the $\beta^n$ are monic,
too. (One uses that $[\hat\eta,\hat\phi]:\Id + \hat F \hat
F^*\cong\hat F^*$.) This implies that for testing 
equivalence in the extension semantics we can replace $\bar\gamma_n$ with
\begin{equation*}
  \hat\gamma_n=\beta^n_1\cdot \bar\gamma_n:TX\to \hat F^*1.
\end{equation*}
We are now ready to state the semantic comparison result:
\begin{thm}\label{thm:jss}
  Let $F$ be a finitary endofunctor, and let $T$ be a finitary monad,
  both on $\Set$. Further let $\rho:TF\to FT$ be a functor-over-monad
  distributive law. Then two states in $FT$-coalgebras are equivalent
  under the extension semantics iff for $\alpha:FT\to M$ as
  given above, their $\alpha$-trace sequences are identified under
  componentwise postcomposition with $\hat F^*\bang_{T1}$. That is, in
  the above notation,
  \begin{equation}
    \label{eq:s}
    \hat\gamma_n\cdot \eta^T_X=\hat F^*\bang_{T1}\cdot M\bang_X\cdot\gamma^{(n)}.
  \end{equation}
\end{thm}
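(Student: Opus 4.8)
The plan is to prove~\eqref{eq:s} by a chain of reductions that transport the whole computation into the Eilenberg--Moore category $\Set^T$, where the free-monad calculus of $\hat F^*$ applies directly, and then to conclude by a short induction on~$n$. The first reduction concerns the right-hand side. Since $M\bang_X$ is $\hat F^*$, taken as an endofunctor of $\Set^T$, applied to the free-algebra morphism induced by $\bang_X$, and $\hat F^*\bang_{T1}$ is $\hat F^*$ applied to the unique morphism from the free $T$-algebra on $1$ to the terminal $T$-algebra, functoriality and terminality give $\hat F^*\bang_{T1}\cdot M\bang_X=\hat F^*\bang_{TX}$, where $\bang_{TX}$ now denotes the unique $\Set^T$-morphism from the free $T$-algebra on $X$ to the terminal one. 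Hence the right-hand side of~\eqref{eq:s} equals $\hat F^*\bang_{TX}\cdot\gamma^{(n)}$.

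The second reduction lifts the iterates $\gamma^{(n)}$ to $\Set^T$. Recall that $M$ is the monad of the composite adjunction $\Set\rightleftarrows\Set^T\rightleftarrows\mathsf{Alg}\,\hat F$; for such a composite, the Kleisli category of the composite monad $M$ is (equivalent to) the full subcategory of the Kleisli category of $\hat F^*$ over $\Set^T$ on the free $T$-algebras $TX$, with a Kleisli map $f\colon X\to MY$ corresponding to its transpose $\bar f\colon TX\to\hat F^*TY$, i.e.\ the unique $T$-algebra morphism with $\bar f\cdot\eta^T_X=f$. Under this identification $\alpha\gamma=\hat\kappa_{TX}\cdot\gamma$ transposes to $\hat\kappa_{TX}\cdot D\gamma$, using $\alpha=\hat\kappa T$ and $D\gamma\cdot\eta^T_X=\gamma$ (noted above). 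Consequently $\gamma^{(n)}$, the $n$-fold Kleisli iteration of $\alpha\gamma$ in $M$, transposes to the $n$-fold Kleisli iteration $\delta_n\colon TX\to\hat F^*TX$ of $\hat\kappa_{TX}\cdot D\gamma$ in $\hat F^*$; in particular $\gamma^{(n)}=\delta_n\cdot\eta^T_X$, so the right-hand side of~\eqref{eq:s} becomes $\hat F^*\bang_{TX}\cdot\delta_n\cdot\eta^T_X$.

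After these reductions it suffices to establish the equality of $T$-algebra morphisms $TX\to\hat F^*1$
\begin{equation*}
  \hat\beta^n_1\cdot\bar\gamma_n \;=\; \hat F^*\bang_{TX}\cdot\delta_n ,
\end{equation*}
for then precomposing with $\eta^T_X$ turns the left side into $\hat\gamma_n\cdot\eta^T_X$ and the right side into the reduced form of the right-hand side of~\eqref{eq:s}. I would prove this by showing that both sides satisfy the recursion $c_0=\hat\eta_1\cdot\bang_{TX}$, $c_{n+1}=\hat\varphi_1\cdot\hat F(c_n)\cdot D\gamma$. For $\hat\beta^n_1\cdot\bar\gamma_n$ this is immediate from the recursive definitions of $\hat\beta^n$ and of the canonical cone $\bar\gamma_n$ of the $\hat F$-coalgebra $D\gamma$. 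For $\hat F^*\bang_{TX}\cdot\delta_n$ one unfolds $\delta_{n+1}=\delta_n^{*}\cdot\hat\kappa_{TX}\cdot D\gamma$ (the analogue of~\eqref{eq:gamman-alt} inside $\hat F^*$), moves $\hat F^*\bang_{TX}$ inward using naturality of $\hat\mu$, $\hat\varphi$ and $\hat\eta$ together with $\hat\kappa=\hat\varphi\cdot\hat F\hat\eta$, and collapses the outcome using the standard free-monad identity $\hat\mu\cdot\hat\kappa\hat F^*=\hat\varphi$; the base case uses naturality of $\hat\eta$ and $\delta_0=\hat\eta_{TX}$.

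The step I expect to be the main obstacle is the second reduction: pinning down the monad structure of the composite $M=\hat F^*T$ precisely enough --- in particular, how $\mu^M$ factors through $\hat\mu$ and the Eilenberg--Moore structure maps of the free algebras $\hat F^*TX$ --- so that Kleisli composition in $M$ really does correspond to Kleisli composition in $\hat F^*$ along the free-$T$-algebra functor. This is exactly where the warning that $\hat F^*$ is \emph{not} a lifting of $F^*$ makes itself felt, and it has to be handled by unwinding the composite adjunction rather than by any naive interchange of $\hat F^*$ with the forgetful functor $\Set^T\to\Set$.\qed
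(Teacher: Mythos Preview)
Your proposal is correct and follows essentially the same approach as the paper: both transport the equation into $\Set^T$ by replacing $\gamma^{(n)}$ with its $T$-transpose $(\gamma^{(n)})^\sharp$ (your $\delta_n$) and then induct on $n$, the key inputs being $D\gamma=\gamma^\sharp$, $\alpha=\hat\kappa T$, and the free-monad identity $\hat\mu\cdot\hat\kappa\hat F^*=\hat\varphi$. The only cosmetic difference is that the paper presents the induction step as a single chain of equalities on $(\gamma^{(n)})^\sharp$ directly, whereas you phrase it as both sides satisfying the same recursion $c_{n+1}=\hat\varphi_1\cdot\hat F(c_n)\cdot D\gamma$; your ``main obstacle'' is exactly what the paper dispatches at the outset by describing the $M$-Kleisli star as the two-step extension $f\mapsto f^\sharp\mapsto f^*$.
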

\begin{proof}
  We first recall how the Kleisli extension $f \mapsto f^*$ for the
  monad $M$ is obtained. Given $f: X \to MY$ one first extends this to
  the unique $T$-algebra morphism $f^\sharp: TX \to MY$ with $f^\sharp
  \cdot \eta^T_X = f$ (i.\,e.~one applies the Kleisli extension of
  $T$). Then one obtains $f^*: MX = \hat F^*TX \to \hat F^* TY = MY$ as the unique $\hat
  F$-algebra morphism with $f^* \cdot \hat \eta_{TX} =
  f^\sharp$. Notice that in this notation we have $D(\gamma) =
  \gamma^\sharp$ and that the inductive step of the definition on
  $\bar\gamma_n$ can be written as $\bar\gamma_{n+1} = \hat F
  \bar\gamma_n \cdot \gamma^{\sharp}: TX \to \hat F^n 1$. Observe
  further that, since $\hat\gamma_n$, $\hat F^*\bang_{T1}$ and $M\bang$ are $T$-algebra homomorphisms, \refeq{eq:s} is equivalent to 
  \begin{equation}
    \label{eq:ind}
    \hat\gamma_n = \hat F^*\bang_{T1}\cdot M\bang_X\cdot (\gamma^{(n)})^\sharp.
  \end{equation}
  We now prove~\refeq{eq:s} by induction on $n$. For the base case $n
  = 0$ we have:
  \[
  \begin{array}{rcl@{\quad}p{5cm}}
    \hat F^*\bang_{T1}\cdot M\bang_X\cdot \gamma^{(0)}
    & = & \hat F^* \bang_{T1}\cdot \hat F^* T\bang_X \cdot \eta^M_X 
    & $M = \hat F^* T$ and def.~of $\gamma^{(0)}$
    \\
    & = & \hat F^*\bang_{T1}\cdot \hat F^* T\bang_X \cdot \hat\eta_{TX} \cdot \eta^T_X & since $\eta^M = \hat \eta T \cdot \eta^T$ 
    \\
    & = & \hat \eta_1 \cdot \bang_{T1}\cdot T\bang_X \cdot \eta^T_X & naturality of $\hat \eta$ \\
    & = & \hat \eta_1 \cdot \bang_{TX} \cdot \eta^T_X & uniqueness of $\bang_{TX}$ \\
    & = & \hat\beta^0_1 \cdot \bar\gamma_0 \cdot \eta^T_X & def.~of $\hat\beta^0$ and $\bar\gamma_0$ \\
    & = & \hat \gamma_0 \cdot \eta^T_X & def.~of $\hat \gamma_0$.
  \end{array}
  \]
  For the induction step we compute:
  \[
  \begin{array}{rcl@{\quad}p{5cm}}
    \multicolumn{4}{l}{\hat F^*\bang_{T1}\cdot M\bang_X\cdot \gamma^{(n+1)}} \\
    \quad
    & = & \hat F^*\bang_{T1}\cdot \hat F^* T\bang_X\cdot (\gamma^{(n)})^* \cdot \alpha_X \cdot \gamma
    & $M = \hat F^* T$ and def.~of $\gamma^{(n+1)}$ \\
    & = & \hat F^*\bang_{T1}\cdot \hat F^* T\bang_X\cdot (\gamma^{(n)})^* \cdot \hat\varphi_X \cdot \hat F\hat\eta_{TX} \cdot \gamma & def.~of $\alpha$ \\
    & = & \hat\varphi_1 \cdot \hat F\hat F^*\bang_{T1}\cdot \hat F \hat F^* T\bang_X\cdot \hat F (\gamma^{(n)})^*\cdot \hat F\hat\eta_{TX} \cdot \gamma 
    & $\hat F$-algebra morphisms \\
    & = & \hat\varphi_1 \cdot \hat F\hat F^*\bang_{T1}\cdot \hat F \hat F^* T\bang_X\cdot \hat F (\gamma^{(n)})^\sharp \cdot \gamma & def.~of $(-)^*$ \\
    & = & \hat\varphi_1 \cdot \hat F \hat\gamma_n \cdot \gamma & induction hypothesis~\refeq{eq:ind} \\
    & = & \hat\varphi_1 \cdot \hat F \hat\beta^n_1 \cdot \hat F \bar \gamma_n \cdot \gamma & def.~of $\hat\gamma^n$ \\
    & = & \hat\beta^{n+1}_1 \cdot F \bar \gamma_n \cdot \gamma & def.~of $\beta^{n+1}$ \\
    & = & \hat\beta^{n+1}_1 \cdot F \bar \gamma_n \cdot \gamma^\sharp \cdot \eta^T_X & $(-)^\sharp$ Kleisli extension \\
    & = & \hat\beta^{n+1}_1 \cdot \bar\gamma_{n+1}\cdot \eta^T_X & def.~of $\bar\gamma_{n+1}$ \\
    & = & \hat\gamma_{n+1} \cdot \eta^T_X & def.~of $\hat\gamma_{n+1}$.
  \end{array}
  \vspace*{-17pt}
  \]
  \qed
\end{proof}
In the base example in work on extension semantics
\cite{JacobsEA12,bms13}, the case of non-deterministic automata
understood as coalgebras of the form $\gamma: X \to 2 \times
\Pow(X)^\Sigma$, the situation is as follows. The extension semantics
of $\gamma$~\cite[Section~5.1]{JacobsEA12}
yields a map $\mathit{tr}: X \to \Pow(\Sigma^*)$ that maps each state
$x \in X$ to the language accepted by the automaton with starting
state $x$.

To understand the above theorem in terms of this concrete example, we
fix $FX = 2 \times X^\Sigma$ and $TX = \Powf(X)$ (to ensure
finitarity). Understood as an algebraic signature, $F$ can be
represented by two $\Sigma$-ary function symbols $\yes$ and $\no$. The
monad $M=\hat F^*T$ has these operations and those of $\Powf$, i.e.\
the join semilattice operations, which we write using set notation;
the distributive law $\rho$ allows us to distribute joins over $\yes$
and $\no$, favouring $\yes$ over $\no$ to reflect the acceptance
condition of (existential) non-deterministic automata. The trace
semantics $\alpha_X: FTX \to MX$ embeds flat terms, i.e.\ terms of the
form $\yes( (U_a)_{a \in \Sigma})$ or $\no( (U_a)_{a \in \Sigma}) \in
FTX$ (with $U_a \in \Pow(X)$), into general (non-flat) terms. 
Every step in the construction of $\gamma^{n}(c)$ puts a flat term on
top of terms constructed in the previous step, and then distributes
$T$-operations (joins) over their arguments as indicated. Therefore,
the terms $\gamma^{(n)}(c)$ are terms of uniform depth in the
$F$-operations over sets of variables, i.e.\ they are elements of
$F^nTC$. For the alphabet $\Sigma = \lbrace 0, 1 \rbrace$, a typical
component of the trace sequence $T^\alpha_\gamma(c)$, i.e. $M!_X 
\gamma^{(n)}(c)$ for some $n$ can be visualised as a tree
like the one on the left:
\[
  \xymatrix@C=2ex{ 
  & & & & \yes \ar[dll]_0 \ar[drr]^1 & & & &   & & & & \yes \ar[dll]_0 \ar[drr]^1 & & & &\\
  & &\no \ar[dl]_0 \ar[dr]^1  & & & & \yes \ar[dl]_0 \ar[dr]^1 & &    & &\no \ar[dl]_0 \ar[dr]^1  & & & & \yes \ar[dl]_0 \ar[dr]^1 & &    \\
  & \lbrace \ast \rbrace  & & \emptyset & & \emptyset & &  \lbrace
  \ast \rbrace   &  & \ast  & &  \ast   & & \ast & &  \ast. 
  }
\]
This tree conveys the information that the empty word $\epsilon$ and
the word $1$ lead to final states (i.e.\ are accepted in the sense of
language semantics), and additionally that $00$ and $11$ are not
blocked; generally, the $\alpha$-trace sequence records at each stage
which words are accepted and additionally which words can be executed
without deadlock. The tree on the right is then obtained by applying
$\hat F^*!_{T1}$. This erases the information on non-blocked words, so
that only the information that $\epsilon$ and $1$ are accepted
remains; this yields the extension semantics~\cite{JacobsEA12,bms13},
i.e.\ language semantics of the automaton, as formally stated in
Theorem~\ref{thm:jss}. As noted already in Section~\ref{sec:examples},
if we move to non-blocking non-deterministic automata, then
$\alpha$-trace equivalence coincides directly with language
equivalence -- note that in this case, $T$ is non-empty powerset, so
that $!_{T1}$ is a bijection, i.e.\ postcomposing the $\alpha$-trace
sequence with $\hat F^*!_{T1}$ does not lose information. Informally,
this is clear as non-acceptance of words due to deadlock never happens
in a non-blocking nondeterministic automaton.

\subsubsection*{Fixpoint Definitions} Trace semantics, and
associated linear-time logics, are also considered in
\cite{Cirstea:2014:CAL}. The framework considered in \emph{op.cit.}
is similar to that of \cite{HasuoEA07} in that it applies to systems
of type $X \to TFX$ where $T$ is a monad (that describes the
branching) and $F$ a polynomial
endofunctor (modelling the traces). The monad $T$ is required to be
commutative and partially additive, thus inducing a partial additive
semiring structure on $T1$. In the examples of interest, one
recovers the monad $T$ as induced  by this semiring structure. 

Given a system $(X, f: X \to TFX)$, trace semantics then arises as a
$T1$-valued relation $R: X \times Z \to T1$ where $Z = \nu F$ is the
final coalgebra of the functor $F$ defining traces. For this to be
well-defined, one additionally requires that the semiring $T1$ has
suprema of chains, with order defined in the standard way. 

The crucial difference to our approach is that trace semantics is
defined \emph{coinductively} on the \emph{infinite unfolding} of the
functor $F$ defining the shape of traces, whereas our definition is
\emph{inductive} and based on \emph{finite unfoldings}. 

The difference becomes apparent when looking at examples. For
labelled transition systems $X \to \Pow(A \times X)$, the trace
semantics of \emph{op.cit.} is a function $X \to \Pow(A^w)$ that
maps $x$ to the set of maximal traces, and two states are trace
equivalent if they have the same set of \emph{infinite} traces. This
contrasts with our treatment where equivalent states have the same
\emph{finite} traces. Similarly, for generative probabilistic
systems, i.e. systems of shape $X \to \fD(A \times X)$ where $\fD$ is
the discrete distributions functor, \emph{op.cit.} the trace
semantics obtained in \emph{op.cit.} associates probabilities to
maximal (infinite) traces whereas our treatment is centered around
probabilities of finite prefixes.  In summary, the main conceptual
difference 
between \cite{Cirstea:2014:CAL} and our approach is that between
infinite and finite traces. Technically, this difference is manifest
in the coinductive definition of \emph{op.cit.} whereas our approach
defines traces inductively.

\section{Conclusions}
One of the main important aspects of the general theory of coalgebra
is a uniform  theory of strong bisimulation. In coalgebraic terms,
strong bisimulation is a simple concept, readily defined, supports a
rich theory and instantiates to the natural and known notions for
concretely given transition types. Instead of re-establishing facts
about strong bisimulation on a case-by-case basis, separately for
each type of transition system, the coalgebraic approach
provides a general theory of which specific results for concretely
given systems are mere instances: a coalgebraic success story.

The question about whether a similar success story for trace
equivalence can also be told in a coalgebraic setting has been the
subject of numerous papers (discussed in the previous section in
detail) but has so far not received a satisfactory answer. 

One of the reasons why trace semantics has so far been a more
elusive concept is the fact that -- even for concretely given
systems such as labelled transition systems with explicit
termination -- there are many, equally natural, formulations of
trace equivalence. This suggests that trace equivalence, by its very
nature, cannot be captured by one general definition, but needs an
additional parameter that defines the precise nature of traces one
wants to capture.

In contrast to other approaches in the literature, we account for
this fact by parametrising trace semantics by an embedding of a
functor (that defines the coalgebraic type of system under
consideration) into a monad (that allows us to sequence
transitions). As a consequence, our definition is more flexible, and
subsumes existing notions. Conceptually speaking, this manifests
itself in the fact that other approaches impose various technical
conditions like order enrichment or partial additivity of a monad
that are geared towards capturing a \emph{specific} notion of trace
equivalence, whereas our definition is parametrised to capture the
entire range of the linear-time branching-time spectrum. This is
evidenced by Proposition \ref{propn:strong-bisim} that shows that
(even) strong bisimulation is a specific instance of our
parameterised definition. 

Technically, we have presented a simplified notion of a semantics of
finite traces for coalgebras. This novel account allows us to deal
with new examples and subsumes previous proposals of a semantics of
finite traces. Important points for future work include a
generalisation to behavioural preorders, as well as appropriate logics
that characterise these preorders and ensuing equivalences.

\bibliographystyle{myabbrv}
\bibliography{coalgml}

\begin{thebibliography}{10}

\bibitem{AcetoEA07}
L.~Aceto, A.~Ing{\'o}lfsd{\'o}ttir, K.~Larsen, and J.~Srba.
\newblock {\em Reactive systems: modelling, specification and verification}.
\newblock Cambridge University Press, 2007.

\bibitem{aczel:nwfs}
P.~Aczel.
\newblock {\em Non-Well-Founded Sets}.
\newblock CSLI, Stanford, 1988.

\bibitem{Barr70}
M.~Barr.
\newblock Coequalizers and free triples.
\newblock {\em Math.\ Zeitschr.}, 116:307--322, 1970.

\bibitem{bms13}
M.~M. Bonsangue, S.~Milius, and A.~Silva.
\newblock Sound and complete axiomatizations of coalgebraic language
  equivalence.
\newblock {\em ACM Trans.~Comput.~Log.}, 14(1:7), 2013.

\bibitem{Cirstea:2014:CAL}
C.~C\^{\i}rstea.
\newblock A coalgebraic approach to linear-time logics.
\newblock In {\em Foundations of Software Science and Computation Structures,
  FoSSaCS 2014}, vol. 8412 of {\em LNCS}, pp. 426--440. Springer, 2014.

\bibitem{GGM76}
V.~Giarratana, F.~Gimona, and U.~Montanari.
\newblock Observability concepts in abstract data type specifications.
\newblock In {\em Mathematical Foundations of Computer Science, MFCS 1976},
  vol.~45 of {\em LNCS}, pp. 576--587. Springer, 1976.

\bibitem{GoguenT74}
J.~Goguen and J.~Thatcher.
\newblock Initial algebra semantics.
\newblock In {\em Switching and Automata Theory, SWAT (FOCS) 1974}, pp. 63--77.
  IEEE Computer Society, 1974.

\bibitem{HasuoEA07}
I.~Hasuo, B.~Jacobs, and A.~Sokolova.
\newblock Generic trace semantics via coinduction.
\newblock {\em Log.\ Methods Comput.\ Sci.}, 3, 2007.

\bibitem{HW85}
R.~Hennicker and M.~Wirsing.
\newblock {Observational Specification: A Birkhoff Theorem}.
\newblock In {\em Workshop on Theory and Applications of Abstract Data Types,
  WADT 1985, Selected Papers}, pp. 119--135. Springer, 1985.

\bibitem{Hoare85}
A.~Hoare.
\newblock {\em Communicating sequential processes}.
\newblock Prentice Hall, 1985.

\bibitem{JacobsEA12}
B.~Jacobs, A.~Silva, and A.~Sokolova.
\newblock Trace semantics via determinization.
\newblock In {\em Coalgebraic Methods in Computer Science, CMCS 2012}, vol.
  7399 of {\em LNCS}, pp. 109--129. Springer, 2012.

\bibitem{Kelly80}
M.~Kelly.
\newblock A unified treatment of transfinite constructions for free algebras,
  free monoids, colimits, associated sheaves, and so on.
\newblock {\em Bull.\ Austral.\ Math.\ Soc.}, 22:1--83, 1980.

\bibitem{KissigKurz10}
C.~Kissig and A.~Kurz.
\newblock Generic trace logics.
\newblock arXiv preprint 1103.3239, 2011.

\bibitem{PowerT99}
J.~Power and D.~Turi.
\newblock A coalgebraic foundation for linear time semantics.
\newblock In {\em Coalgebraic Methods in Computer Science, CMCS 1999}, vol.~29
  of {\em ENTCS}, pp. 259--274. Elsevier, 1999.

\bibitem{Reichel81}
H.~Reichel.
\newblock Behavioural equivalence – a unifying concept for initial and final
  specification methods.
\newblock In {\em Math. Models in Comp. Systems, Proc. 3rd Hungarian Comp. Sci.
  Conference}, pp. 27--39, 1981.

\bibitem{bbrs_fsttcs}
A.~Silva, F.~Bonchi, M.~Bonsangue, and J.~Rutten.
\newblock Generalizing the powerset construction, coalgebraically.
\newblock In K.~Lodaya and M.~Mahajan, eds., {\em Proc.~IARCS Annual Conference
  on Foundations of Software Technology and Theoretical Computer Science
  (FSTTCS 2010)}, vol.~8 of {\em Leibniz International Proceedings in
  Informatics (LIPIcs)}, pp. 272--283, 2010.

\bibitem{sbbr13}
A.~Silva, F.~Bonchi, M.~M. Bonsangue, and J.~J.~M.~M. Rutten.
\newblock Generalizing determinization from automata to coalgebras.
\newblock {\em Log.~Methods Comput.~Sci}, 9(1:9), 2013.

\bibitem{TuriP97}
D.~Turi and G.~Plotkin.
\newblock Towards a mathematical operational semantics.
\newblock In {\em Logic in Computer Science, LICS 1997}, pp. 280--291, 1997.

\bibitem{Glabbeek90}
R.~van Glabbeek.
\newblock The linear time-branching time spectrum (extended abstract).
\newblock In {\em Theories of Concurrency: Unification and Extension, CONCUR
  1990}, vol. 458 of {\em LNCS}, pp. 278--297. Springer, 1990.

\end{thebibliography}

\end{document}